\title{An Auction Approach to  Distributed Power Allocation for Multiuser Cooperative Networks}
\author{Yuan Liu,~\IEEEmembership{Student~Member,~IEEE}, Meixia
Tao,~\IEEEmembership{Senior~Member,~IEEE}, and Jianwei
Huang,~\IEEEmembership{Senior~Member,~IEEE}
\thanks{Manuscript received December 26, 2011; revised April 22, 2012 and August 8, 2012; accepted October 14, 2012. The Editor coordinating the review of this paper and approving it for publication was Prof. Homayoun Yousefi'zadeh.}
\thanks{Y. Liu and M. Tao are with the Department of
Electronic Engineering at Shanghai Jiao Tong University, Shanghai,
200240, P. R. China. Email: \{yuanliu,
mxtao\}@sjtu.edu.cn.}\thanks{J. Huang is with the Department of
Information Engineering at The Chinese University of Hong Kong, Hong
Kong. Email: jwhuang@ie.cujk.edu.hk.}\thanks{This work is supported by the National 973 project under grant 2012CB316100, the Joint Research Fund for Overseas Chinese, Hong Kong and Macao Young Scholars under grant 61028001, and the Innovation Program of Shanghai Municipal Education Commission under grant 11ZZ19. This work of J. Huang is supported by the General Research Fund (Project Number 412710) established under the University Grant Committee of the Hong Kong Special Administrative Region, China. This paper was
presented in part at IEEE Global Telecommunications Conference
(GLOBECOM), Houston, Texas, USA, Dec. 2011 \cite{YuanAuctionGC}.}}
\begin{document}
\maketitle

\vspace{-1cm}
\begin{abstract}

This paper studies a wireless network where multiple users cooperate
with each other to improve the overall network performance. Our goal
is to design an optimal distributed power allocation algorithm  that
enables user cooperation, in particular, to guide each user on the
decision of transmission mode selection and relay selection. Our
algorithm has the nice interpretation of an auction mechanism with multiple
auctioneers and multiple bidders.  Specifically, in our proposed
framework, each user acts as both an auctioneer (seller) and a
bidder (buyer). Each auctioneer determines its trading price and
allocates power to bidders, and each bidder chooses the demand from
each auctioneer. By following the proposed distributed algorithm,
each user determines how much power to reserve for its own
transmission, how much power to purchase from other users, and how
much power to contribute for relaying the signals of others. We
derive the optimal  bidding and pricing strategies that maximize the
weighted sum rates of the users. Extensive simulations are carried out to verify our proposed
approach.

\end{abstract}

\begin{keywords}

Cooperative communications, user cooperation, power allocation,
distributed algorithm, auction theory.

\end{keywords}

\newtheorem{theorem}{Theorem}
\newtheorem{remark}{Remark}
\newtheorem{proposition}{Proposition}
\renewcommand{\algorithmiccomment}[1]{// #1}

\section{Introduction}
\setlength\arraycolsep{2pt}

By exploiting the inherent broadcast nature of wireless radio waves,
users can cooperate and improve the  network throughput and energy
efficiency in wireless networks
\cite{Laneman03,Laneman04,Sendonaris1,Sendonaris2}. Although the
performance of small-scale  cooperative communications has been
extensively studied from an information theoretic perspective, there
still exist many open problems of realizing the full potential of
cooperative communication schemes in practical large-scale networks.

In this paper, we design a distributed resource allocation framework
for cooperative communications that addresses several key practical
challenges.
First of all, forwarding other users' packets consumes valuable
resources (e.g., battery energy, transmission slots or bandwidth)
and may degrade a user's own performance. Therefore, we need to
design a mechanism that guides distributed users to
cooperate.
Second, determining whether and how to perform cooperative
transmission depend on channel conditions between users and can be
complicated.
Third, when cooperative communication is desirable, there can be
more than one user that is suitable to serve as the relay. Thus, we
need to decide how to select the transmission mode (direct or
relay transmission) and the associated relay node(s) for each
source node.
Fourth, if a user decides to help others relaying the messages, it
needs to balance the resource (such as power, bandwidth, and time
slots) reserved for itself and the resource provided for others.
Therefore, we need to design a mechanism so that each user captures
the optimal tradeoff of resource allocation.

Previous work on resource allocation in cooperative networks fall
into two categories: centralized (e.g., \cite{Yu, Himsoon, Tang,
Kim, Le, YuanTW, YuanTCOM}) and distributed (e.g.,
\cite{Bletsas,Savazzi,Sergi,Wang,Huang,Ren}). The centralized schemes
often require global channel state information (CSI), thus are often
not scalable due to the large signaling overhead. Distributed
schemes based on local information and limited message passing are
thus more favorable in practical systems. There have been several
results on distributed resource allocation in cooperative networks.
For example, a distributed power allocation algorithm for
single source-destination pair multi-relay networks was presented in
\cite{Wang} based on the Stackelberg game model, where a source is
modeled as a buyer and relays are modeled as sellers. The power
allocation is only performed by the relays but not by the source.
The authors in \cite{Huang} studied an auction-based distributed
power allocation scheme for the case where multiple
source-destination pairs are assisted by a fixed single relay node.
Therein, the relay acts as the auctioneer and the sources act as the
bidders. Like in \cite{Wang}, \cite{Huang} also adopts relay-power
allocation assuming that the sources' power are fixed.
Authors in \cite{Ren} investigated distributed power allocation in a
multiple source-destination pairs and single relay network, where
the relay sets prices and multiple sources act as a non-cooperative
game. This work assumed that the relay's power is fixed and the
power optimization is done by the sources.

In this paper, we propose a new auction-based power allocation
framework for multi-user cooperative networks, with the objective of
maximizing the weighted sum rates of the users.
Specifically, we design a distributed power allocation algorithm which has the nice interpretation of a  multi-auctioneer multi-bidder power
auction, in which each user acts as both an auctioneer and a bidder.
Each auctioneer independently announces its trading price and sells
power, and each bidder dynamically decides whether to buy, from
which auctioneer(s) to buy, and how much to buy.
By following the proposed distributed mechanism, each user can
achieve the optimal resource utilization and maximize the system
weighted sum rates in a fully distributed fashion.

Our paper distinctly differs from the previous work \cite{Wang,
Huang, Ren} in threefold: 1) unlike those work considering the
\emph{relay-assisted} cooperative communications with dedicated
relays, we study the user cooperation scenario where users cooperate
and help  each other;
2) unlike these results only optimize relay-power \cite{Wang, Huang}
or source-power \cite{Ren}, we optimize the power allocation of each
transmitting node to tradeoff the resource consumption of
transmitting its own traffic and forwarding other nodes' traffic;
3) by employing distributed multiple-input multiple-output (MIMO)
techniques (e.g., \cite{Jing,Yiu}) for relay transmission,
our proposed power allocation scheme implicitly incorporates both
relay selection and transmission mode selection between
direct transmission and relay transmission, which enables each
user to decide whether to cooperate, whom to cooperate with, and how
to cooperate in a distributed fashion through a unified framework.
 We also discuss some implementation issues of the proposed
distributed mechanism in practical wireless networks, including
synchronization, channel estimation, interaction procedure, and
step-size selections.

The remainder of this paper is organized as follows. Section II
describes the system model and the problem formulation. Section III
presents the details of the proposed multi-auctioneer and multi-bidder
power auction. Extensive simulations are provided in Section IV.
Finally we conclude the paper in Section V.

\section{System Model and Problem Formulation}

The network consists of $K$ cooperative users, each of which has its
own source node and destination node. Different users have distinct
source nodes, but may share the same destination node. The users can
cooperate with each other, thus the source node of a user can be a
transmitter and a forwarder simultaneously. The main idea of
cooperative transmission is to utilize the extra power available at
those source nodes, by enabling them to act as relays for those
sources who are far from their destinations. Such cooperative
transmission can be regarded as a distributed multiple-antenna
system.
Each node is subject to the half-duplex constraint, so that it
cannot transmit and receive at the same time over the same channel. It is possible, however, for the node to transmit and
receive simultaneously on different channels.

\begin{figure}[t]
\begin{centering}
\includegraphics[scale=.5]{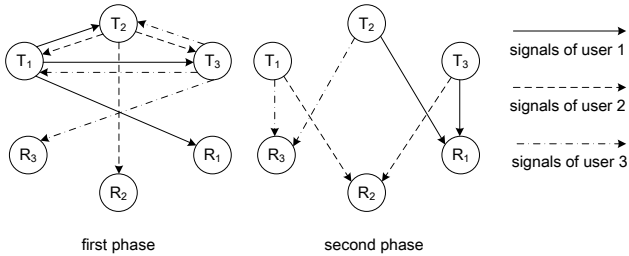}
\vspace{-0.1cm}
 \caption{An example of the cooperative communication
with two users (transmitter-receiver pairs). ${\rm T_i}$ and ${\rm
R_i}$ denote the transmitter and receiver of user $i$,
respectively.}\label{fig:tran}
\end{centering}
\vspace{-0.3cm}
\end{figure}

Let $\mathcal {K}=\{1,2,...,K\}$ be the set of users. Each user is allocated an orthogonal channel to transmit its own data information.\footnote{It is assumed that the spectrum is equally divided based on the number of users.} Without loss of generality, we assume that channel $i$ is allocated to user $i$.  The cooperative transmission
includes two phases, with an example shown in Fig.~\ref{fig:tran}. In
the first phase, the source node of user $i$ ($1\leq i \leq K$)
transmits the message to its own destination on channel $i$ and
listens on all other channels. In the second phase, some source nodes
(who have the proper channel conditions and extra transmission power)
form a distributed MIMO system and simultaneously
help to relay the signals of user $i$ on channel $i$ by using distributed space-time
codes \cite{Jing,Yiu}.

The $K\times K$ matrix $\boldsymbol{p}$ denotes the transmission power, where
$p_{j,i}$ denotes the amount of power that the source node of user
$j$ contributes to forwarding user $i$'s information for $i\neq j$,
or the amount of power user $i$ consumes for its own transmission
for $i= j$. The sum of the $j$th row of $\boldsymbol{p}$ represents
the total power consumption of user $j$, which is subject to a peak
power constraint $\overline{p}_j$.
Note that this power matrix  implicitly accounts for transmission
mode and relay node adaptations. For instance, if $\boldsymbol{p}$
is a diagonal matrix, then only direct transmission is selected;
full user cooperation is selected if all the elements of
$\boldsymbol{p}$ are non-zero.

Let $\boldsymbol{R}$ be a $K\times 1$ vector whose element $R_i$
denotes the achievable rate of user $i$ at a given power allocation
vector $\{p_{j,i}\}_{j=1}^K$. This achievable rate definition can
accommodate different distributed MIMO techniques, such as the nonregenerative based
amplify-and-forward (AF) and regenerative based decode-and-forward (DF) relay strategies. The detailed
expression of rate $R_i$ will be given later.

The global network objective is to allocate the power on each source
node in order to maximize the weighted sum rates of the users. The optimization problem can be formulated as
follows (\textbf{P1}):

\begin{eqnarray}
\textbf{P1:}~~~~~~~{\rm max} && \sum_{i \in \mathcal {K}}w_iR_i \\
s.t. && \sum_{i\in \mathcal {K}}p_{j,i} \leq \overline{p}_j,~\forall j\in\mathcal {K}  \label{eqn:c1}\\
{\rm variables}&& \boldsymbol{p}\succeq \mathbf{0}, \label{eqn:c2}
\end{eqnarray}
where $w_i$ is the weight that represents the priority of user $i$.

If date rate $R_i$ for each user $i$  is a concave function of
the power vector $\{p_{j,i}\}_{j=1}^K$, then the objective function
is concave as any positive linear combination of concave functions
is concave. Moreover, constraint (\ref{eqn:c1}) is convex and
constraint (\ref{eqn:c2}) is affine. Hence the feasible set of this
optimization problem is convex. Therefore, \textbf{P1} is a convex
optimization problem and there exists a globally optimal solution. The
Lagrangian of \textbf{P1} is given by:
\begin{equation}
L\left(\boldsymbol{p},\boldsymbol{\lambda}\right) = \sum_{i\in
\mathcal {K}} w_iR_i - \sum_{j\in\mathcal
{K}}\lambda_j\left(\sum_{i\in\mathcal
{K}}p_{j,i}-\overline{p}_j\right),
\end{equation}
where
$\boldsymbol{\lambda}=(\lambda_1,\lambda_2,...,\lambda_K)\succeq 0$
are the Lagrange multipliers related to the power constraints.
Applying the Karush-Kuhn-Tucker (KKT) conditions \cite{Boyd}, we
obtain the following necessary and sufficient conditions for the
optimal primal variables $\boldsymbol{p}^*$ and dual variables
$\boldsymbol{\lambda}^*$:
\begin{eqnarray}
\frac{\partial L\left(\boldsymbol{p}^*,\boldsymbol{\lambda}^*\right)}{\partial p_{j,i}^*} &\leq& 0, \forall i, j \in \mathcal {K},\label{eqn:a1}\\
\lambda_j^*\left(\sum_{i\in \mathcal {K}}p_{j,i}^*-\overline{p}_j\right) &=& 0, \forall j\in \mathcal {K},\label{eqn:a2}\\
\sum_{i\in \mathcal {K}}p_{j,i}^* &\leq& \overline{p}_j, \forall j\in \mathcal {K},\label{eqn:a3}\\
\boldsymbol{p}^*\succeq \mathbf{0},~\boldsymbol{\lambda}^* &\succeq
& \mathbf{0}. \label{eqn:a4}
\end{eqnarray}
Note that the KKT conditions imply that $w_iR'_i(p_{j,i}^*) = \lambda_j^*$ if $p_{j,i}^*>0$, and $w_iR'_i(p_{j,i}^*) \leq \lambda_j^*$ if $p_{j,i}^*=0$.


Next we study the concavity of the achievable rates $\boldsymbol{R}$
with respect to the power allocation matrix $\boldsymbol{p}$. For an
illustration purpose, we employ the AF relay strategy in this
paper.
We model the wireless fading environment by the large-scale path
loss, shadowing, and small-scale Rayleigh fading. The
additive white Gaussian noises (AWGN) at all users are assumed to be
independent circular symmetric complex Gaussian random variables,
each having zero mean and unit variance.

In the first phase of the cooperative communication, each source
node $i\in \mathcal {K}$ broadcasts its signal $x_i$ to all other
source nodes for all $ j\neq i$ and its destination. Thus the
received signals at its destination and node $j$ from $i$ are given
by, respectively
\begin{equation}
y_{i} = \sqrt{p_{i,i}}h_{i,i}x_{i} + n_i,
\end{equation}
and
\begin{equation}
y_{i,j} = \sqrt{p_{i,i}}h_{i,j}x_{i} + n_j,~\forall j\neq i,
\end{equation}
 where $n_i$ is the AWGN at destination $i$, $h_{i,j}$ is the
channel gain from source $i$ to destination $j$, $\forall i,j$.

In the second phase, all other users participate in the cooperative transmission of user $i$ in the form of AF based distributed space-time coding. Specifically, each user $j \in \mathcal {K}\setminus i$ multiplies its received signal from user $i$ over a certain time duration, denoted as $\boldsymbol{y}_{i,j}$, with a distributed space-time code matrix $\boldsymbol{A}_j$ and then forwards the coded signal to the destination using power $p_{j,i}$.
The received signal vector at user $i$' destination is
\begin{equation}\label{eqn:zi}
\boldsymbol{z}_{i} = \sum_{\forall j\in{\mathcal{K}}\setminus
i}\sqrt{p_{j,i}}h_{j,i}\boldsymbol{A}_j\boldsymbol{y}_{i,j} + n_i.
\end{equation}
Here, the distributed space-time code matrices $\{\boldsymbol{A}_j\}$ are chosen carefully so that the full diversity order can be achieved \cite{Jing}. By
employing the maximal-ratio combining for the direct and cooperative
links, the achievable rate of user $i$ can be written as
{\small
\begin{eqnarray*}\label{eqn:rate}
R_i = \frac{1}{2}{\rm log}_2 \Bigg(1 + p_{i,i}|h_{i,i}|^2
+\sum_{\forall j\in{\mathcal{K}}\setminus i}
\frac{p_{i,i}|h_{i,j}|^2p_{j,i}|h_{j,i}|^2}{1 + p_{i,i}|h_{i,j}|^2 +
p_{j,i}|h_{j,i}|^2}\Bigg).
\end{eqnarray*}
}

Note that $R_i$ is not jointly concave in $p_{i,i}$ and $p_{j,i}$. To make the analysis tractable, we
adopt the following upper bound approximation:
\begin{equation}\label{eqn:approx}
R_i \approx \frac{1}{2}{\rm log}_2 \left(1 + p_{i,i}|h_{i,i}|^2
+\sum_{\forall j\in{\mathcal{K}}\setminus i}
\frac{p_{i,i}|h_{i,j}|^2p_{j,i}|h_{j,i}|^2}{p_{i,i}|h_{i,j}|^2 +
p_{j,i}|h_{j,i}|^2}\right),
\end{equation}
assuming that the signal amplified and forwarded by the relays is in
the high SNR regime. Such upper bound is tight and commonly used for
AF rate expression in the literature (e.g., \cite{Tang}) for the
single-relay case.
It can also be proved that  $R_i$ in \eqref{eqn:approx} is strictly concave by evaluating
the Hessian matrix \cite{Boyd}.
Note that in \eqref{eqn:zi}-\eqref{eqn:approx} we use
the general notation, $\forall j\in{\mathcal K}\setminus i$,
for the relay index $j$ with respect to user $i$. This does not change
the rate results nor the concavity of $R_i$, because for a user
$j$ who is not involved in the cooperation we can simply let
$p_{j,i}=0$. We can also view this as the relay selection decision of user $i$.
Moreover, if $p_{j,i}=0$ for all $j\in{\mathcal K}\setminus i$, then
the transmission mode of user $i$ becomes direct transmission,
otherwise it uses relay transmission mode. In other words,
our proposed power allocation framework implicitly involves
transmission mode selection between direct transmission and relay
transmission.

In the next section, we propose an auction based algorithm to
achieve the optimal solution of \textbf{P1} in a distributed
fashion.

\section{Auction-Based Distributed Power Allocation}

Auction theory \cite{Krishna} has been viewed as an efficient method
to allocate wireless resource in different scenarios, e.g., rate
control \cite{Kelly}, spectrum allocation \cite{Sun,Yang,Chen},
spectrum access \cite{Zhou,Stanojev,Niyato1}, and spectrum sharing
\cite{Niyato,Huang2}.
Most of these schemes are based on a centralized auctioneer (or a
seller) since there is a single divisible resource to be allocated
among bidders (buyers). Therefore, these auction-based resource
allocation schemes cannot be directly applied to our considered
multi-user cooperation scenario.

In this section, we design a distributed algorithm to solve the problem \textbf{P1}. The distributed algorithm has a nice interpretation of a multi-auctioneer and multi-bidder auction.
We further analyze the convergence and discuss the implementation
issues in practical networks.

\subsection{Multi-Auctioneer Multi-Bidder Mechanism}

\begin{figure}[t]
\begin{centering}
\includegraphics[scale=.7]{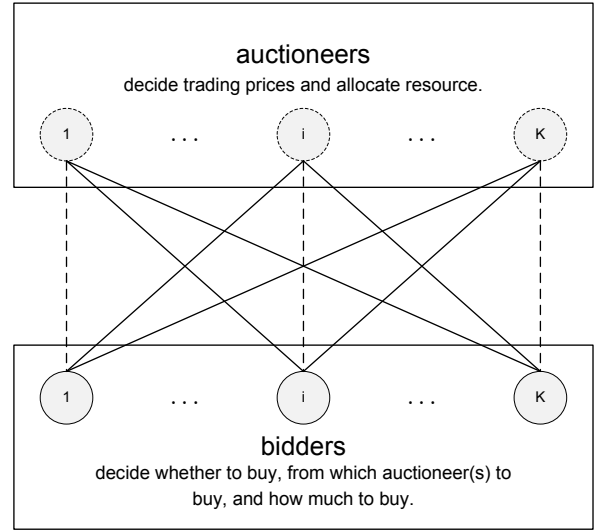}
\vspace{-0.1cm}
 \caption{Illustration of interaction among nodes.}\label{fig:inter}
\end{centering}
\vspace{-0.3cm}
\end{figure}

We achieve efficient  power allocation through a multi-auctioneer
multi-bidder power auction. Each user has two roles in the auction:
an auctioneer and a bidder. The interaction of the
users is illustrated in Fig. \ref{fig:inter}, in which  each bidder
dynamically decides whether to buy, from which auctioneer(s) to buy,
and how much to buy. In the sequence, each auctioneer independently
announces its trading price and allocates power to the bidders.

In the proposed approach, each user $i$ submits a bid $b_{j,i}$ to
each user $j$. When $i=j$, we assume that user $i$ submits a
\textit{virtual} bid $b_{i,i}$ to itself, since it also acts as an
auctioneer (denoted as dotted lines in Fig.~\ref{fig:inter}).
%
By bidding for its own resource, each user $i$ can determine how
much power $p_{i,i}$ to consume for itself, besides how
much power $p_{j,i}$ (for all $ j\neq i$) it buys from others.

Let $\boldsymbol{\pi}=(\pi_1,\pi_2,\ldots,\pi_K)$ be the price
vector of the auctioneers, and $\boldsymbol{b}$ the bidding matrix
whose element $b_{j,i}$ means the willingness bidder $i$ buys from
auctioneer $j$. By definition,
$\boldsymbol{b}_i=\{b_{j,i}\}_{j=1}^K$, the column $i$ of
$\boldsymbol{b}$, is user $i$'s bidding vector. The multi-auctioneer
multi-bidder power auction consists of two steps:
\begin{enumerate}
\item For a given price vector $\boldsymbol{\pi}$, each user $i\in\mathcal{K}$,
determines its demand vector $\{p_{j,i}\}_{j=1}^K$, then submits the
corresponding bid vector $\{b_{j,i}\}_{j=1}^K$ to auctioneers;

\item For the collected bids $\boldsymbol b$, each auctioneer $j\in\mathcal{K}$,
determines its supply vector $\{p_{j,i}\}_{i=1}^K$ and announces its
price $\pi_j$.
\end{enumerate}
The key challenge is how to design the price vector
$\boldsymbol{\pi}$ and bidding matrix $\boldsymbol{b}$ so that the
outcome of the proposed power auction is equivalent to the optimal
solution of \textbf{P1}.

We introduce the two-side auction rule as follows:
\begin{enumerate}
\item At the side of the bidders, each bidder $i\in\mathcal{K}$,
submits it bid proportionally to the price of auctioneer $j$ and the
power it will purchase from auctioneer $j$, i.e.,
$b_{j,i}=\pi_jp_{j,i}$, $\forall j$. Intuitively, if $p_{j,i}=0$,
bidder $i$ does not bid for auctioneer $j$.
\item At the side of the auctioneers, we adopt Kelly-mechanism \cite{Kelly} such that each auctioneer $j\in\mathcal{K}$, attempts to
maximize the surrogate function $\sum_{i \in \mathcal
{K}}b_{j,i}{\rm log}p_{j,i}$ by allocating power $p_{j,i}$ according the bids $b_{j,i}$. Note that the surrogate function can
be selected arbitrarily as long as it is differentiable, strictly increasing,
and concave in $p_{j,i}$.
\end{enumerate}

In what follows, we describe the two-side auction in detail.

\subsubsection{Bidder-Side Auction}
Each bidder $i$ maximizes its surplus, which is the difference
between the benefit from buying power from auctioneers and its
payments. For the given prices $\boldsymbol{\pi}$, bidder $i$
 first determines its optimal demand according to the
following surplus maximization  (\textbf{Bidder Sub-Problem}):
\begin{equation}\label{eqn:indpro}
\max_{\{p_{j,i}\}_{j=1}^K} S_{i} = w_iR_i - \sum_{j\in \mathcal
{K}}\pi_jp_{j,i}.
\end{equation}

It is not difficult to prove that the surplus function $S_i$ is
jointly concave in $\{p_{j,i}\}_{j=1}^K$, where $R_i$ is a function
of $\{p_{j,i}\}_{j=1}^K$ (as defined in (\ref{eqn:approx})).  Due to
the concavity of $S_i$, bidder $i$ can optimally choose the unique
power vector $\{p_{j,i}^*\}_{j=1}^K$ to maximize its profit.
Then bidder $i$ submits its optimal bids to auctioneers according
its optimal demand and the given prices $\boldsymbol{\pi}$:
\begin{equation}
b_{j,i}^* = p_{j,i}^*\pi_j,~\forall j.
\end{equation}

Differentiating $S_i$ with respect to $p_{j,i}$, we can obtain the
sufficient and necessary first order condition:
\begin{equation}\label{eqn:s}
\frac{\partial S_i}{\partial p_{j,i}^*} = w_iR_i'(p_{j,i}^*) - \pi_j =
0, ~~~\forall i, j.
\end{equation}

Observing the KKT conditions of \textbf{P1}, we
notice that if auctioneers announce their prices as
\begin{equation}
\pi_j^* = \lambda_j^* = w_iR_i'(p_{j,i}^*),~~~\forall i, j,
\end{equation}
the optimal power $\boldsymbol{p}^*$ in the bidder sub-problem is
consistent with the one in \textbf{P1}.

From above we can see that the individual optimum in the Bidder
Sub-problem is also the global optimum if the prices are
appropriately selected.

\subsubsection{Auctioneer-Side Auction}
After introducing the Bidder Sub-Problem, we now turn to the
auctioneers.
Solving the optimal power supply of each auctioneer $j$ can be
formulated as (\textbf{Auctioneer Sub-Problem}):
\begin{eqnarray}\label{eqn:aucsubp}
{\rm max} && \sum_{i \in \mathcal {K}}b_{j,i}{\rm log}p_{j,i} \\
s.t. && \sum_{i\in \mathcal {K}}p_{j,i} \leq \overline{p}_j  \label{eqn:cc1}\\
{\rm variables} && \boldsymbol{p}\succeq \boldsymbol{0}
\label{eqn:cc2}
\end{eqnarray}
The associated Lagrangian can be written as
\begin{equation}
L'_j=\sum_{i \in \mathcal {K}}b_{j,i}{\rm
log}p_{j,i}-\mu_j\left(\sum_{i\in \mathcal {K}}p_{j,i} -
\overline{p}_j\right),
\end{equation}
where $\mu_j$ is the Lagrange multiplier of auctioneer $j$. The KKT
conditions for the Auctioneer Sub-Problem are given by
\begin{eqnarray}
p_{j,i}^* &=& \frac{b_{j,i}}{\mu_j^*}, \forall i \in \mathcal {K},\label{eqn:aa1}\\
\mu_j^*\left(\sum_{i\in \mathcal {K}}p_{j,i}^*-\overline{p}_j\right) &=& 0, \label{eqn:aa2}\\
\sum_{i\in \mathcal {K}}p_{j,i}^* &\leq& \overline{p}_j, \label{eqn:aa3}\\
\boldsymbol{p}^*\succeq \mathbf{0},~\mu_j^* &\geq & 0.
\label{eqn:aa4}
\end{eqnarray}
By comparing the Auctioneer Sub-Problem with \textbf{P1}, one can
see that if $\boldsymbol{\mu}=\boldsymbol{\lambda}$ and bidders
select their bids as follows:
\begin{equation}
b_{j,i}^* = p_{j,i}^* w_iR'_i(p_{j,i}^*),
\end{equation}
then (\ref{eqn:aa1})-(\ref{eqn:aa4}) are equivalent to
(\ref{eqn:a1})-(\ref{eqn:a4}) and the solutions of the Auctioneer
Sub-Problem for all auctioneers coincide with \textbf{P1}.

\subsection{Distributed Algorithm}

We now design a mechanism to realize the multi-auctioneer
multi-bidder power auction in a distributed fashion, in which we
incorporate \textit{primal-dual algorithms} which have been studied
extensively in the literature. The mechanism is executed
iteratively. Formally, we present the detailed mechanism in
Algorithm 1. Each iteration consists of bid update (Algorithm 2),
power allocation, and price update. Note that Algorithm 2
incorporates auctioneer (or relay) selection and transmission mode
selection.
\begin{algorithm}[!t]
\caption{Multi-Auctioneer Multi-Bidder Power Auction}
\textbf{Initialization}. Set the iteration index $t=0$. Set the
direct transmission mode to be the initial state for all nodes,
i.e., $\boldsymbol{p}^{(0)}={\rm
diag}(\overline{p}_1,...,\overline{p}_i,...,\overline{p}_K)$.
Randomly generate a $K\times K$ bid matrix $\boldsymbol{
b}^{(0)}\succeq 0$ and a price vector $\boldsymbol{
\lambda}^{(0)}=(\lambda_1^{(0)},...,\lambda_i^{(0)},...,\lambda_K^{(0)})\succeq
0$.

\textbf{repeat}
\begin{itemize}

\item  $t\leftarrow t+1$.

\item \textbf{Bidder Sub-Problem} in
\eqref{eqn:indpro}:
\begin{itemize}
\item \textit{Bid update}.~~~//~\texttt{Algorithm $2$}
\end{itemize}

\item \textbf{Auctioneer Sub-Problem} in
\eqref{eqn:aucsubp}:
\begin{itemize}
\item \textit{Power allocation}.  Each user $j\in \mathcal {K}$  (as an auctioneer)
independently  allocates the power $p_{ji}^{(t)}$ to user $i$:
\begin{equation}
p_{j,i}^{(t)} = \frac{b_{j,i}^{(t-1)}}{\lambda_{j}^{(t-1)}},~~~{\rm
for} ~i=1,2,...,K. \label{eqn:allo}
\end{equation}

\item \textit{Price update}. Each user  $j\in \mathcal
{K}$ (as an auctionneer) updates its price as:
\begin{equation}
\lambda_j^{(t)} = \lambda_j^{(t-1)} +\epsilon_j \left(\sum_{i\in
\mathcal {K}}p_{j,i}^{(t)} - \overline{p}_j\right),\label{eqn:pd}
\end{equation}
where $\epsilon_j$ is a small constant step-size.
\end{itemize}
\end{itemize}
\textbf{until} The price vector  $\boldsymbol{\lambda}$ converges.
\end{algorithm}

\begin{algorithm}[!t]
\caption{Bid update}
\begin{algorithmic}[1]
\STATE  $t\geq 1$; \FOR{each bidder $i=1:K$} \FOR{each auctioneer $j
= 1:K$}

\IF{$\partial S_i^{(t)}/\partial p_{j,i}^{(t)}>0$ \textbf{or
equivalently} $w_iR'_i\left(p_{j,i}^{(t)}\right)>\lambda_j^{(t-1)}$}
 \STATE Submit bid $b_{j,i}^{(t)} = p_{j,i}^{(t)}
w_iR'_i\left(p_{j,i}^{(t)}\right)$ to auctioneer $j$;
\ELSE \STATE $b_{j,i}^{(t)} =0$ and do not submit bid
to auctioneer $j$;
 \ENDIF \ENDFOR \ENDFOR
\end{algorithmic}
\end{algorithm}

There are several points should be noted. Firstly, from
(\ref{eqn:pd}), one can see that we must have $\sum_{i\in \mathcal
{K}}p_{j,i}^{(t)} = \overline{p}_j$ when $t\to\infty$ so that the
KKT condition (\ref{eqn:aa2}) is satisfied as the price $\lambda_j$
cannot be zero.

Secondly, in the procedure of bid update (Algorithm 2),
it is not necessary for each bidder to submit positive bids to all
auctioneers in each iteration. Specifically, the auctioneer selection depends on
two factors. (\ref{eqn:allo}) shows that the purchased power
is proportional to the submitted bids and inversely proportional to
auctioneers' prices. This means that the higher willingness bidder
$i$ has for auctioneer $j$ and the lower price auctioneer $j$
announces, the more power bidder $i$ can get from auctioneer $j$.
Therefore,  a feasible way of auctioneer selection for bidder $i$ is
to observe how $S_i$ varies with $p_{j,i}$, i.e., observe the sign
of $\frac{\partial S_i}{\partial p_{j,i}}$ or, equivalently,
$w_iR'_i(p_{j,i})-\lambda_j$. Note that $\boldsymbol{p}$ is obtained by
Algorithm 1 in the current iteration, and the price vector
$\boldsymbol{\lambda}$ is obtained in the latest previous iteration.
Let us first discuss the case of $i\neq j$. If $\frac{\partial
S_i}{\partial p_{j,i}}>0$ or $w_iR'_i(p_{j,i})>\lambda_j$, this means
that bidder $i$ can obtain a larger profit by increasing the
purchased power $p_{j,i}$ then bidder $i$ bids for auctioneer $j$,
otherwise auctioneer $j$ should not be selected for bidder $i$. This
is due to that the channel gain $h_{i,j}$ is week such that bidder
$i$ cannot benefit from auctioneer $j$, or many other bidders bid
for auctioneer $j$ so that auctioneer $j$ raises its price, then the
profit bidder $i$ benefits from auctioneer $j$ cannot compensate the
payment bidder $i$ pays to auctioneer $j$. The bidding criterion is
also applicable to the case of $i=j$. If $\frac{\partial
S_i}{\partial p_{i,i}}>0$, this means user $i$ can increase its
profit by increasing the consuming power $p_{i,i}$, due to the
channel gain $h_{i,i}$ or $h_{i,j}$ (for some $j$) is strong then it
has more willingness to consume the power for itself, otherwise user
$i$ prefers to sell its power to other users because it can obtain
higher profit by charging others, rather than consumes the power
itself.

Thirdly, in the proposed auction algorithm, it is assumed that all users are price takers, which means that they do not choose their bids strategically to impact the auctioneers' prices. The assumption is reasonable when there are many bidders such that each bidder's impact on the prices is small, and has been widely used in the literature (e.g., \cite{Wang, Huang, Ren,Kelly,Sun,Yang,Chen,Zhou,Stanojev,Niyato1,Niyato,Huang2}). However, when the number of users is small, then users' price anticipating behavior may change resource allocation, and the system should be modeled as a game between the users and the network. The corresponding solution concept is Nash equilibrium, and it is well known that Nash equilibrium often leads to network performance degradation comparing with a globally optimal solution. Such performance gap is often called the ``price of anarchy" \cite{Johari}. Notice that our goal is to design a distributed algorithm to solve a system-level optimization problem instead of game theoretical analysis.

\begin{proposition}
For any initial condition $(\boldsymbol {p}^{(0)},\boldsymbol
{b}^{(0)}, \boldsymbol {\lambda}^{(0)})$, the proposed
multi-auctioneer multi-bidder power auction globally converges to the
globally optimal point $(\boldsymbol {p}^*,\boldsymbol {b}^*,
\boldsymbol {\lambda}^*)$ as $t\rightarrow \infty$.
\end{proposition}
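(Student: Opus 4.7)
The plan is to interpret Algorithm 1 as a distributed primal-dual iteration for the strictly convex program \textbf{P1} and then to invoke a standard Lyapunov argument. Because each $R_{i}$ is strictly concave in the power vector $\{p_{j,i}\}_{i,j}$ and the constraints \eqref{eqn:c1}--\eqref{eqn:c2} are linear, the Lagrangian $L(\boldsymbol p,\boldsymbol\lambda)$ is strictly concave in $\boldsymbol p$ for every fixed $\boldsymbol\lambda\succeq\mathbf 0$. Consequently the dual function $D(\boldsymbol\lambda)=\max_{\boldsymbol p\succeq\mathbf 0}L(\boldsymbol p,\boldsymbol\lambda)$ is continuously differentiable, and by Danskin's theorem $\partial D/\partial\lambda_{j}=\bar p_{j}-\sum_{i}p_{j,i}^{*}(\boldsymbol\lambda)$; hence the price update \eqref{eqn:pd} is exactly a projected gradient-descent step on $D$ with step size $\epsilon_{j}$.

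First I would show that every fixed point of the joint iteration solves the KKT system \eqref{eqn:a1}--\eqref{eqn:a4}. At stationarity, \eqref{eqn:allo} gives $b_{j,i}^{*}=\lambda_{j}^{*}p_{j,i}^{*}$; Algorithm 2 additionally forces $b_{j,i}^{*}=p_{j,i}^{*}w_{i}R_{i}'(p_{j,i}^{*})$ whenever $p_{j,i}^{*}>0$ and $w_{i}R_{i}'(p_{j,i}^{*})\le\lambda_{j}^{*}$ whenever $p_{j,i}^{*}=0$; finally \eqref{eqn:pd} is stationary only when $\sum_{i}p_{j,i}^{*}=\bar p_{j}$ or $\lambda_{j}^{*}=0$. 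These are precisely \eqref{eqn:a1}--\eqref{eqn:a4}, and by strict concavity of \textbf{P1} the resulting point is the unique global optimum $(\boldsymbol p^{*},\boldsymbol b^{*},\boldsymbol\lambda^{*})$.

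Second, I would establish convergence via the Lyapunov function
$$V^{(t)}=\sum_{j\in\mathcal K}\frac{1}{2\epsilon_{j}}\bigl(\lambda_{j}^{(t)}-\lambda_{j}^{*}\bigr)^{2}.$$
Substituting \eqref{eqn:pd} and expanding the square produces a linear drift term $\sum_{j}(\lambda_{j}^{(t-1)}-\lambda_{j}^{*})\bigl(\sum_{i}p_{j,i}^{(t)}-\bar p_{j}\bigr)$ plus a quadratic remainder of order $\epsilon_{j}$. Convexity of $D$, applied through the subgradient inequality at $\boldsymbol\lambda^{*}$, renders the drift term strictly negative away from $\boldsymbol\lambda^{*}$; choosing the step sizes $\epsilon_{j}$ small enough ensures that the quadratic remainder cannot reverse this sign, so $V^{(t+1)}<V^{(t)}$ whenever $\boldsymbol\lambda^{(t)}\neq\boldsymbol\lambda^{*}$. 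Monotonic decrease combined with boundedness of the iterates forces $\boldsymbol\lambda^{(t)}\to\boldsymbol\lambda^{*}$, and continuity of the primal best-response map then yields $\boldsymbol p^{(t)}\to\boldsymbol p^{*}$ and $\boldsymbol b^{(t)}\to\boldsymbol b^{*}$.

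The main technical obstacle is that the primal update \eqref{eqn:allo} is \emph{not} an exact inner maximization: $\boldsymbol p^{(t)}$ is computed from one-iteration-lagged bids and prices rather than from $\arg\max_{\boldsymbol p}L(\boldsymbol p,\boldsymbol\lambda^{(t-1)})$. The mismatch between $\sum_{i}p_{j,i}^{(t)}$ and the true dual gradient $\bar p_{j}-\sum_{i}p_{j,i}^{*}(\boldsymbol\lambda^{(t-1)})$ therefore has to be controlled. I expect this discrepancy to be of order $\epsilon_{j}$, so that Algorithm 1 behaves as an inexact dual gradient scheme whose error is absorbed into the same Lyapunov bound; this is precisely where the ``small constant step size'' caveat following \eqref{eqn:pd} is genuinely used, and where the proof will have to do most of its work.
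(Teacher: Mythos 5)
Your proof skeleton is essentially the one the paper uses: the same quadratic Lyapunov function on the price vector, $V=\tfrac{1}{2}\sum_{j}(\lambda_j-\lambda_j^*)^2$ (yours merely rescales each term by $1/\epsilon_j$), the same fixed-point characterization via the KKT system \eqref{eqn:a1}--\eqref{eqn:a4}, and the same mechanism for making the drift nonpositive --- complementary slackness at $\boldsymbol{\lambda}^*$ together with monotonicity of the gradient map $p_{j,i}\mapsto w_iR_i'(p_{j,i})$, which is what your appeal to concavity of the dual function via Danskin amounts to. The genuine difference is the time model. The paper passes to the continuous-time limit and invokes LaSalle's invariance principle: it differentiates $V$ along trajectories of the ODE $\dot{\lambda}_j=\bigl(\sum_i p_{j,i}-\overline{p}_j\bigr)_{\lambda_j}^+$, and in that limit the identity $\lambda_j=b_{j,i}/p_{j,i}$ holds with no one-step lag, so the chain of inequalities closes exactly with $\dot{V}\le\sum_{i,j}\bigl(w_iR_i'(p_{j,i})-w_iR_i'(p_{j,i}^*)\bigr)\bigl(p_{j,i}-p_{j,i}^*\bigr)\le 0$. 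What the continuous-time route buys is precisely the disappearance of the two issues you flag: the quadratic remainder of order $\epsilon_j^2$ and the mismatch between the iterate $p^{(t)}$ computed from lagged bids and the exact best response to $\lambda^{(t-1)}$.

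What your discrete-time route buys is fidelity to the algorithm as actually stated in \eqref{eqn:allo}--\eqref{eqn:pd}, but at the cost of the step you leave open: you assert, without argument, that the discrepancy between $\sum_i p_{j,i}^{(t)}$ and the true dual gradient at $\boldsymbol{\lambda}^{(t-1)}$ is of order $\epsilon_j$. That is not obvious --- $\boldsymbol{p}^{(t)}$ is produced by a single pass of the bid/allocation map, not by an inner loop run to convergence, so the lag error is governed by how fast that map contracts, not only by the price step size --- and this is the one place where your plan is a conjecture rather than a proof. To be fair, the paper does not close this gap either; it simply analyzes the idealized continuous-time dynamics, where the question does not arise. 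A complete discrete-time argument would require either a contraction estimate for the bid update or a retreat to the ODE limit as in the paper's Appendix A.
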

\begin{proof}
Please see Appendix~A.
\end{proof}

\subsection{Discussion on Implementation Issues}

In this subsection, we show how the proposed multi-auctioneer
multi-bidder power auction can be applied in practical distributed
networks.

First of all, we assume that the users interact each other
synchronously. The synchronization can be implemented at the
head of each transmission packet, in which pilot symbols carry out
the task. This means at the beginning of the auction, every
user (bidder role) bids for power simultaneously, then every user
(auctioneer role) allocates power and announces price
simultaneously.  It is worth noting that our algorithm is also applicable for the case where the users interact each other asynchronously (e.g., there exist some malfunctional users that may not update their actions as frequently as the others) but leads to longer convergence time, which will be detailed in the numerical results. We assume the synchronization here only for reducing the convergence time.

Second, the source of user $i$ needs to know the local information, including $h_{i,i}$,
$h_{j,i}$, and $p_{j,i}$ for all $j\neq i$. Specifically, the destination of user $i$ needs to send the CSI $h_{i,i}$ to its own source, and the source of user $j$ needs to send the CSI $h_{j,i}$ to the source of user $i$ for all $i\neq j$.
Here it is assumed that channel estimation can be implemented at
both source and destination of each user by pilot symbols in the head of
frame, then the CSI can be sent via a feedback channel. Moreover, the
source of user $i$ can know the number of
available relayed nodes and the amount of bidding power
$p_{j,i}$ by itself selecting other users for bidding.

Third, a trading place (i.e., public channel) is needed for
interacting auction information including the initial information
$(\boldsymbol {p}^{(0)},\boldsymbol {b}^{(0)}, \boldsymbol
{\lambda}^{(0)})$, and iterative information  $(\boldsymbol
{p}^{(t)},\boldsymbol {b}^{(t)}, \boldsymbol {\lambda}^{(t)})$. In
each iteration, each user (bidder role) first submits bids and then
determines price and allocates power (auctioneer role) according to
the collected bids. At the end of each iteration, each user makes a
decision and updates its bid and price. Note that an auctioneer's
decision is impacted by bidders' immediate willingness how much to
buy from it, rather than other auctioneers' prices. Thus an auctioneer unnecessarily decodes the prices of others,
though they are announced in an open manner.

Fourth, an appropriate step size for each user is needed. In this paper, we adopt the constant step size rule in \eqref{eqn:pd}.\footnote{Note that there exist other three step size rules, i.e., constant step length, square-summable but not summable, and diminishing step size rules \cite{gradient}. They all can guarantee toward the global optimum but lead to different convergence time, and they have their own advantages and disadvantages \cite{gradient}. How to choose and switch the step size rules in the updating procedure is important in practice, but a detailed discussion seems to be out of the scope of this paper. } Generally
speaking, a larger step size requires less time per iteration but leads to more iterations, while a smaller step size requires more time per iteration but leads to less iterations. In the proposed protocol, if the auction phase ends then
the process turns into the transmission phase. While if any user
does not converge at the end of the auction phase, it will be forced
turn to the transmission phase. In this case, it can randomly make a
decision, which results in the outcome is not optimal and the
performance may degrade substantially. Therefore it is essential
that each user achieves stability in the auction phase. A feasible
way is to enlarge the period of the auction phase, which will reduce
the spectrum efficiency. Here we assume that a slot can be designed
long enough in the system so that the overhead of the auction phase
is negligible.

\section{Simulation Results}

In this section, we evaluate the performance of the proposed
distributed algorithm using simulation. In what follows, we first
consider an example of four users for explicitly illustrating our
proposed multi-auctioneer multi-bidder power auction. Then we
demonstrate the throughput efficiency of the proposed algorithm over
the direct transmission scheme in a network with different number of
users. Without loss of generality, we let $w_i=1$ for all users.

\subsection{A Toy Example of Four Users}

To easily illustrate the details of  the proposed distributed algorithm for multiuser
cooperation, we first consider a toy example with four users. The
system setup for the four-user network is a two-dimensional plane of
size $1\times 1$ km as shown in Fig.~\ref{fig:initial}, where the
destination is located at $(0,0)$, and four nodes are located at
$(0.2,0.5)$, $(0.4,0.3)$, $(0.5,0.8)$, and $(0.8,0.6)$. All nodes
have the same maximum power constraints, i.e., $\overline{p}_i=10$
dB, $\forall i\in \mathcal {K}$. The central frequency is around 5
GHz. The statistical path loss model and shadowing are referred to
\cite{Erceg}, where we set the path loss exponent to be $3.5$ and
the standard deviation of log-normal shadowing is $5.8$ dB. The
small-scale fading between any two nodes is characterized by the
normalized Rayleigh fading.

\begin{figure}[t]
\begin{centering}
\includegraphics[scale=.6]{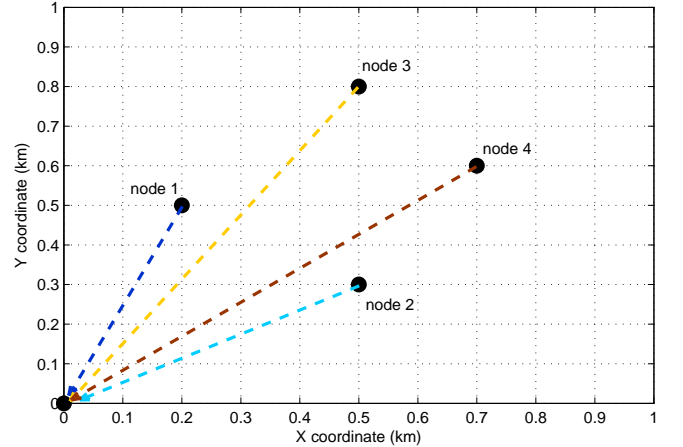}
\vspace{-0.1cm}
 \caption{Initial topology of a network with $4$ user and a destination.}\label{fig:initial}
\end{centering}
\vspace{-0.3cm}
\end{figure}
\begin{figure}[t]
\begin{centering}
\includegraphics[scale=.6]{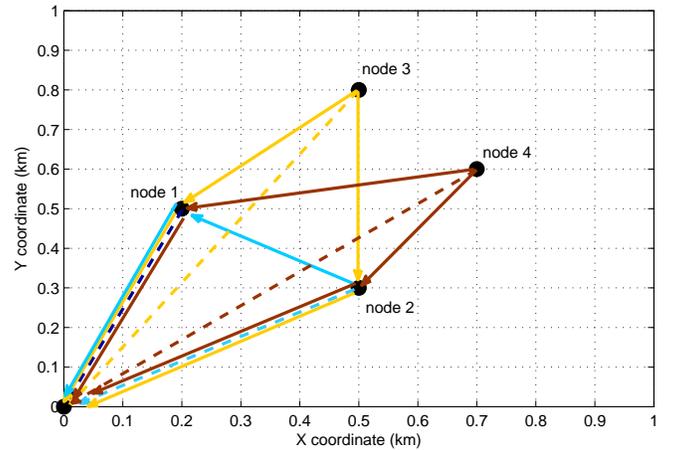}
\vspace{-0.1cm}
 \caption{Final topology of a network with $4$ user and a destination.}\label{fig:final}
\end{centering}
\vspace{-0.3cm}
\end{figure}

Fig.~\ref{fig:initial} presents a topology of the network in the
initial state, where all users transmit information with their
maximum power without cooperation. By implementing the proposed
power auction, the nodes are stimulated to cooperate with each
other. For a given channel realization, the final state of the
network topology is shown in Fig.~\ref{fig:final}, where dash lines
represent direct transmission and solid lines represent relay
transmission, and each color represents the links of one user. One
can see that the nearer a node is to the destination, the more
likely it acts as a relay. For example, as node $1$ is the nearest
to the destination, it forwards the information of other three
nodes, and node $2$ forwards the information of node $3$ and node
$4$. It is also found that node $2$ not only helps node $3$ and node
$4$ but also needs help from node $1$. We further show the power
distribution for each node in Fig.~\ref{fig:power1} and Fig.~\ref{fig:power2}.

\begin{figure}[t]
\begin{centering}
\includegraphics[scale=.55]{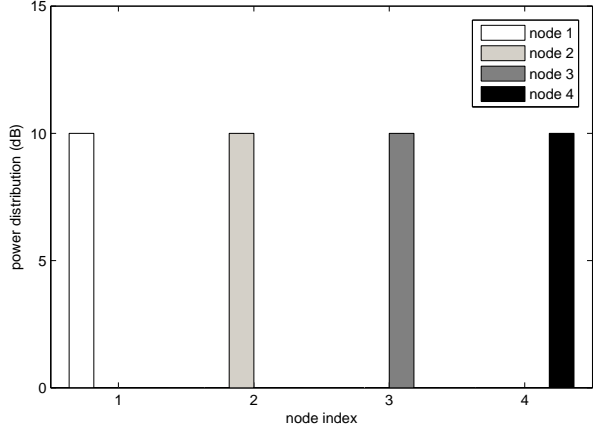}
\vspace{-0.1cm}
 \caption{Initial power distribution among nodes.}\label{fig:power1}
\end{centering}
\vspace{-0.3cm}
\end{figure}
\begin{figure}[t]
\begin{centering}
\includegraphics[scale=.55]{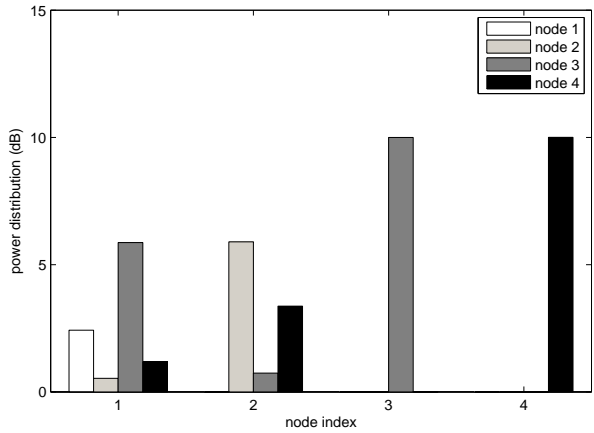}
\vspace{-0.1cm}
 \caption{Final power distribution among nodes.}\label{fig:power2}
\end{centering}
\vspace{-0.3cm}
\end{figure}

\begin{figure}[t]
\begin{centering}
\includegraphics[scale=.55]{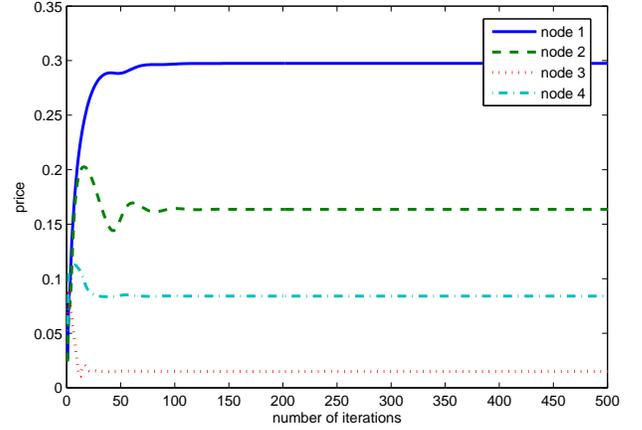}
\vspace{-0.1cm}
 \caption{Convergence speed of prices using $\epsilon_i=10^{-3}$, $\forall i$.}\label{fig:price}
\end{centering}
\vspace{-0.3cm}
\end{figure}
\begin{figure}[t]
\begin{centering}
\includegraphics[scale=.55]{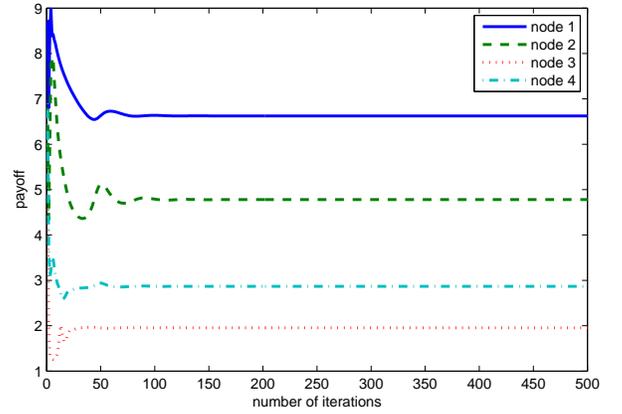}
\vspace{-0.1cm}
 \caption{Convergence speed of payoffs using $\epsilon_i=10^{-3}$, $\forall i$.}\label{fig:payoff}
\end{centering}
\vspace{-0.3cm}
\end{figure}

Fig.~\ref{fig:price} and Fig.~\ref{fig:payoff} show the dynamic
strategies and payoffs of the users, respectively, using step size
$\epsilon_i=10^{-3}$, $\forall i\in \mathcal {K}$. The total payoff
of user $i$ illustrated in the figure is defined as $R_i -
\sum_{j\in \mathcal {K}} p_{j,i}\lambda_j +
\lambda_i\sum_{j\in\mathcal {K}}p_{i,j}$, $\forall i\in\mathcal
{K}$, which is the sum profits of both the auctioneer and bidder.
First of all, we can find that the prices and surplus are convergent
(about $60$ iterations in this example). Secondly, one can see that
the more likely a node acts as a relay, the higher price it
addresses, and the higher surplus it can achieve. Thirdly, for node
$1$ and node $2$ acting as relays, they need more iterations than
node $3$ and node $4$. This can be interpreted as that node $1$ and
node $2$ are in a dilemma whether to transmit themselves or help
others. If they decide to help others, they have to face the
tradeoff between how much power they retain for themselves and how
much power they devote to others. Thus they need more interaction to
make decisions. The observations coincide with the common sense of
economics.

Though we have theoretically proved that the proposed algorithm can
converge to the globally optimal solution when the length of the
auction phase goes to infinite (i.e., $t\rightarrow\infty$), it is
practical to investigate the probability of convergence in finite
time, rather than to make the auction convergent every time.
Fig.~\ref{fig:cdf} presents the cumulative
distribution functions (CDF) of the price converging iteration using
different step sizes. We observe that a small step size can raise
the probability of convergence. For example, to achieve the probability 1 of
convergence we need about $100$, $60$, and $2$ iterations for step
sizes $\epsilon_i=10^{-3}$, $\epsilon_i=10^{-4}$, and
$\epsilon_i=10^{-5}$, respectively. Therefore, if the step size is
small enough, our proposed mechanism can converge to the globally
optimal solution with probability 1. We can further see that, for
the same probability of convergence, node $1$ and node $2$ need more
iterations, the interpretation of which is addressed in the above
paragraph.

\begin{figure*}
 \hspace{-0.5cm}
  \begin{minipage}[t]{0.33\linewidth}
    \centering
    \includegraphics[scale=.43]{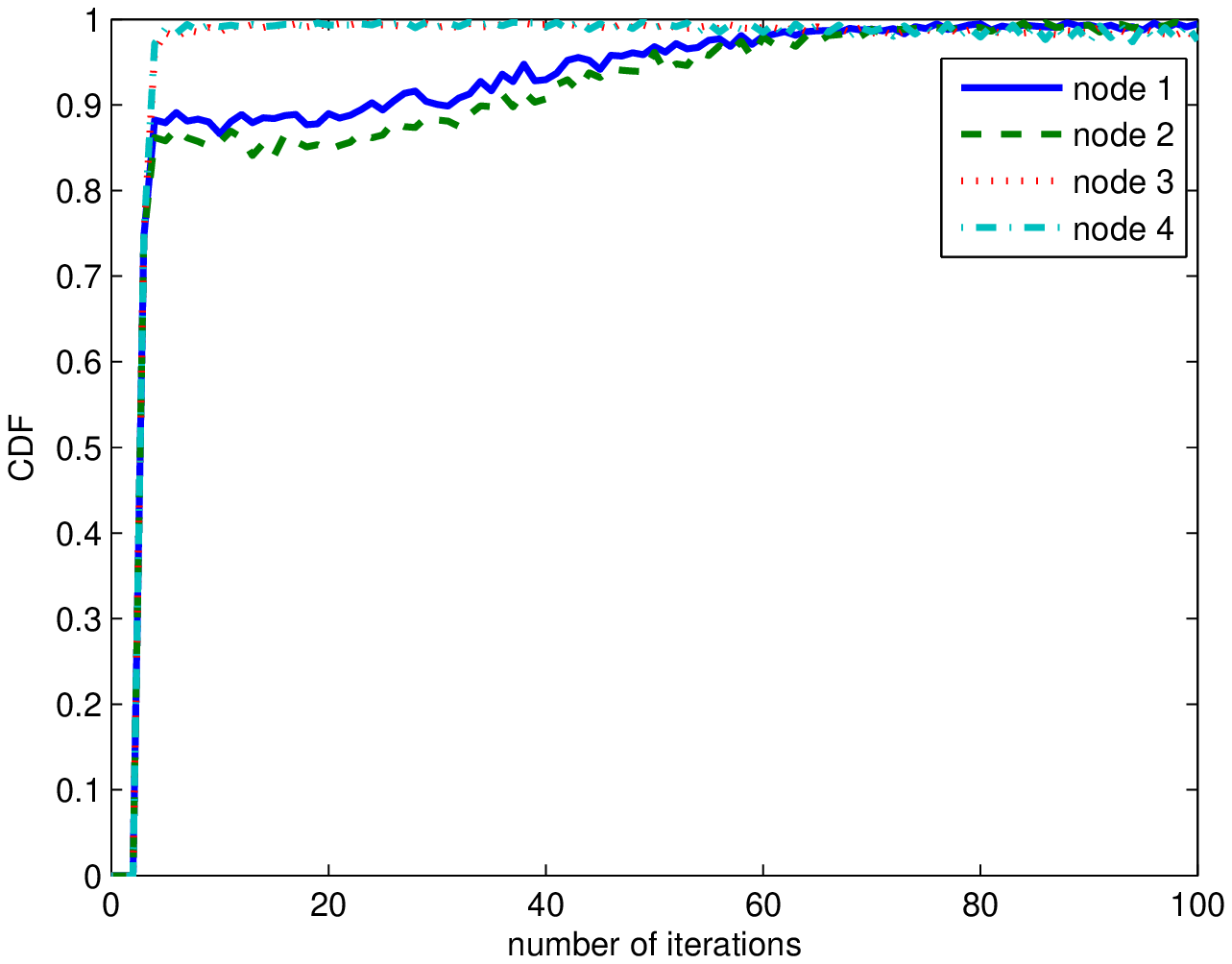}
    \end{minipage}
  \begin{minipage}[t]{0.01\linewidth}~
  \end{minipage}
  \begin{minipage}[t]{0.33\linewidth}
    \centering
    \includegraphics[scale=.43]{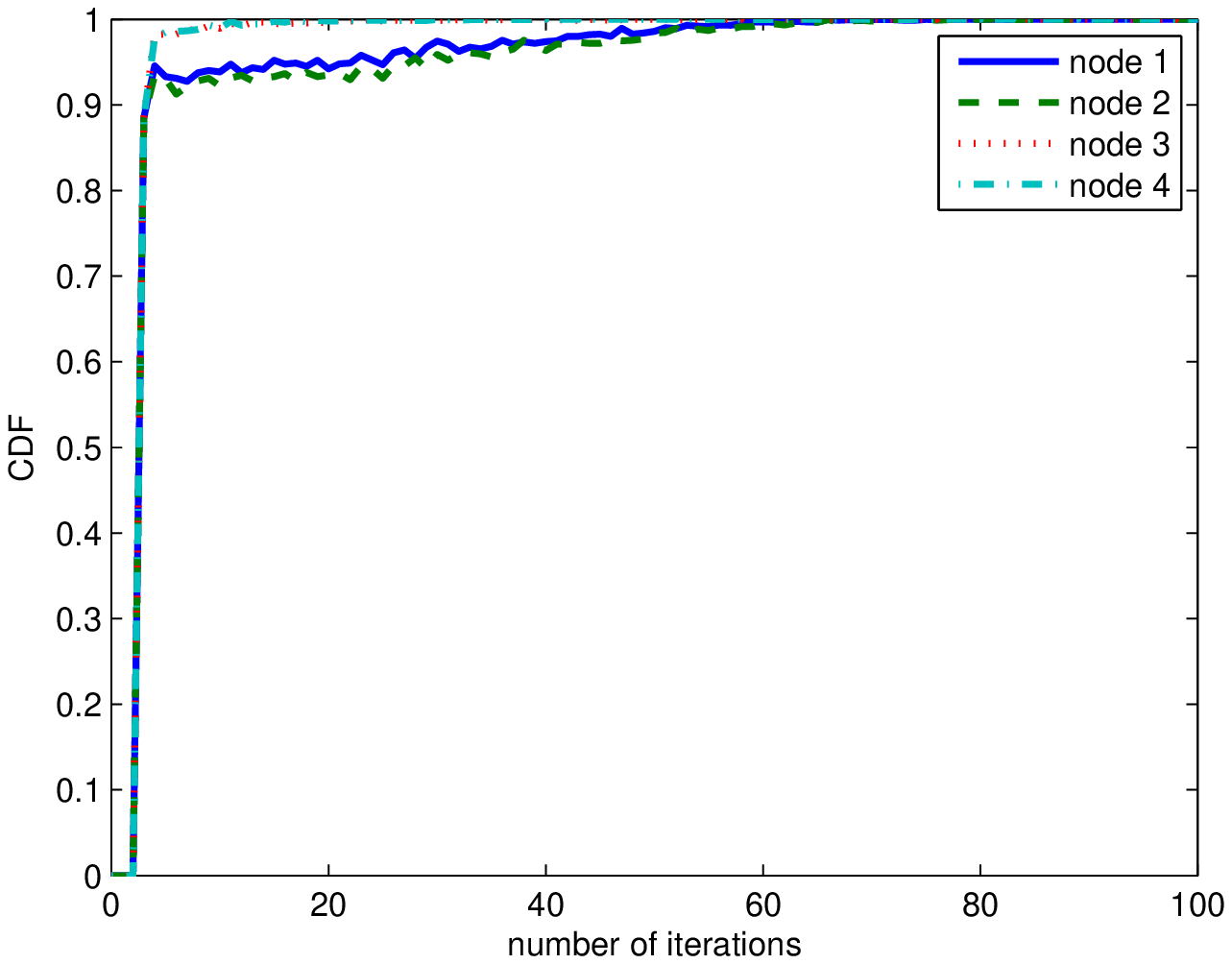}
  \end{minipage}
  \begin{minipage}[t]{0.01\linewidth}~
  \end{minipage}
  \begin{minipage}[t]{0.33\linewidth}
    \centering
\includegraphics[scale=.43]{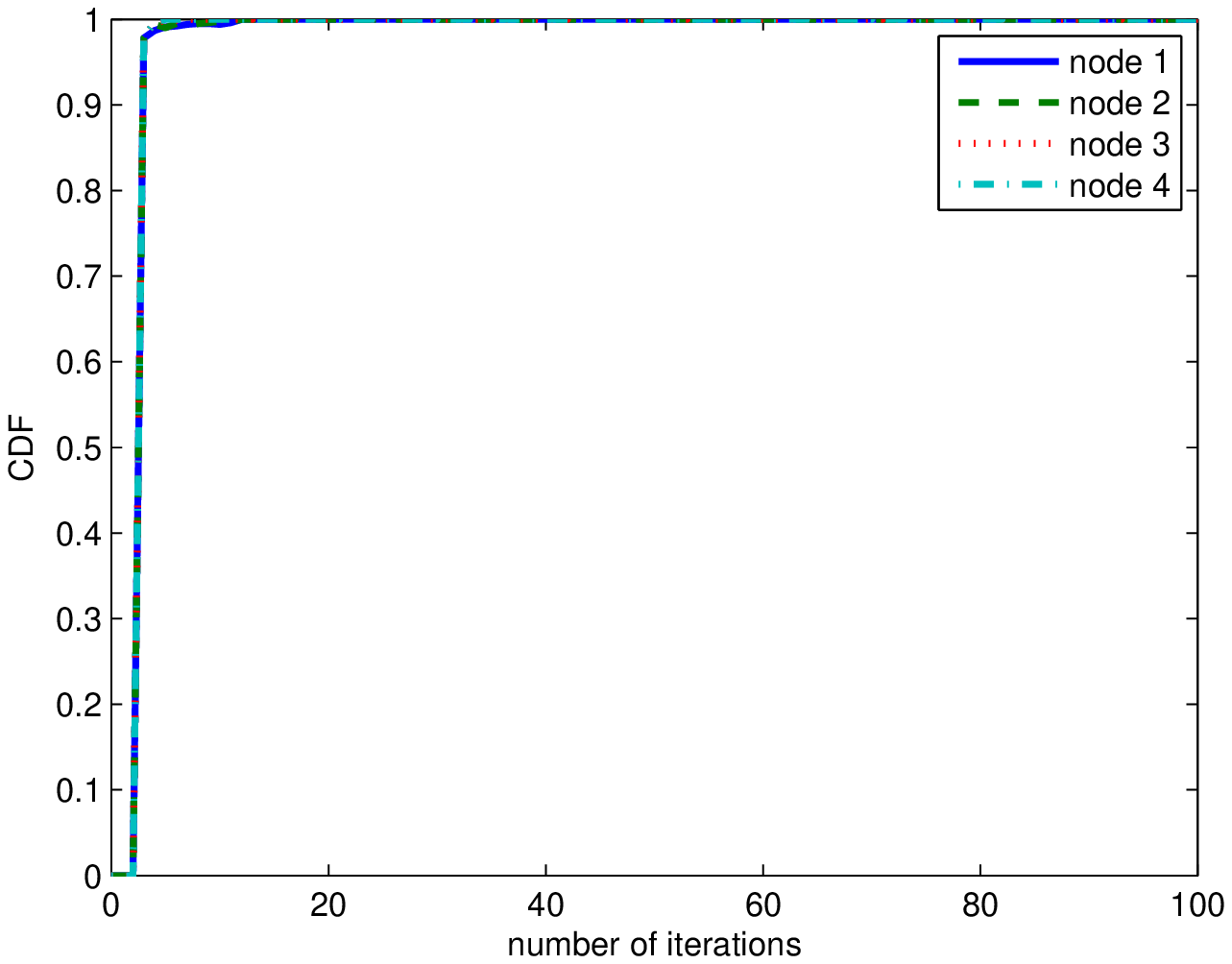}
\end{minipage}%
   \caption{CDF of price converging iterations with step sizes $\epsilon_i=10^{-3}$, $\epsilon_i=10^{-4}$ and $\epsilon_i=10^{-5}$, $\forall i$, respectively.}
   \label{fig:cdf}
\end{figure*}

\begin{figure*}
 \hspace{-0.5cm}
  \begin{minipage}[t]{0.33\linewidth}
    \centering
    \includegraphics[scale=.43]{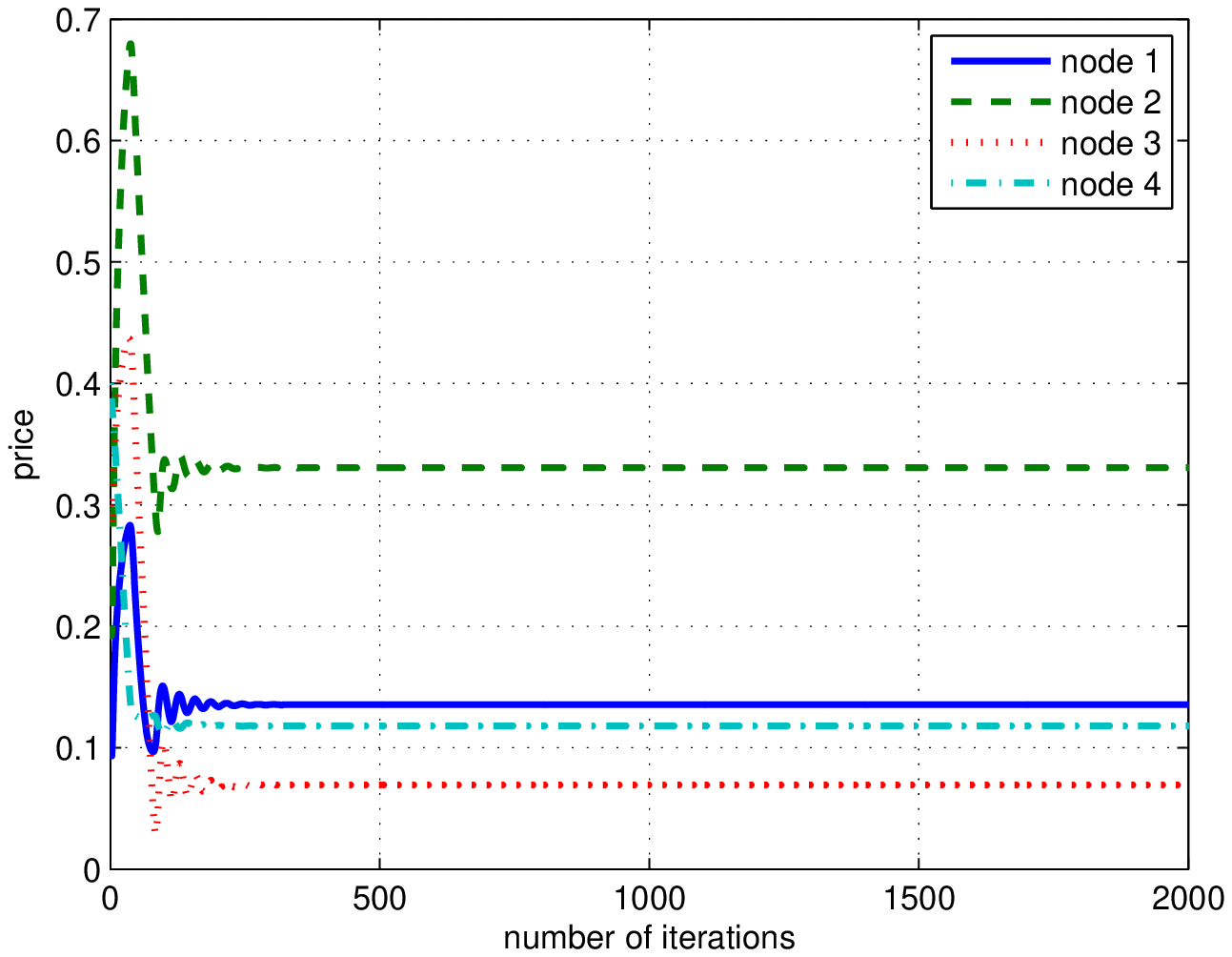}
    \end{minipage}
  \begin{minipage}[t]{0.01\linewidth}~
  \end{minipage}
  \begin{minipage}[t]{0.33\linewidth}
    \centering
    \includegraphics[scale=.43]{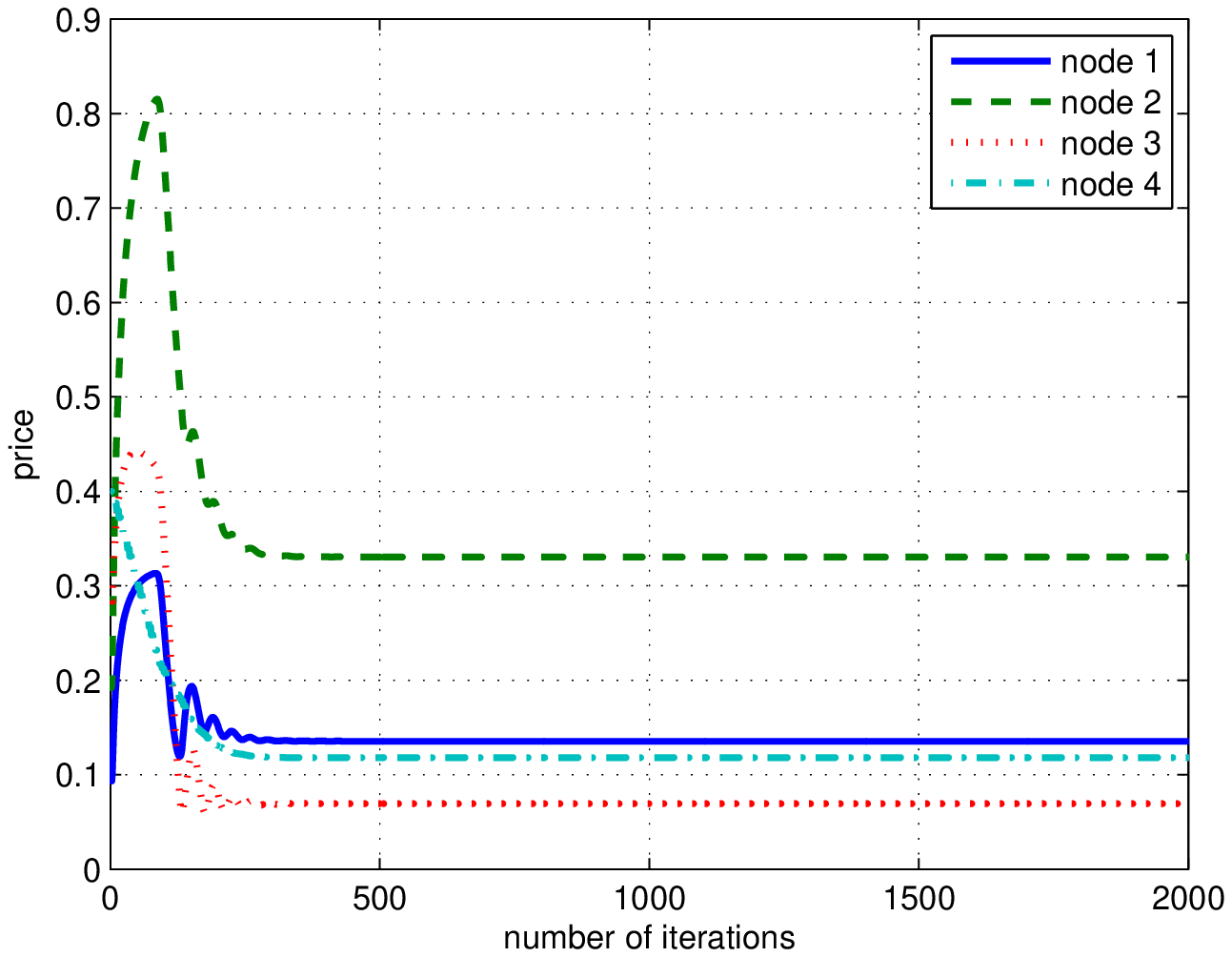}
  \end{minipage}
  \begin{minipage}[t]{0.01\linewidth}~
  \end{minipage}
  \begin{minipage}[t]{0.33\linewidth}
    \centering
\includegraphics[scale=.43]{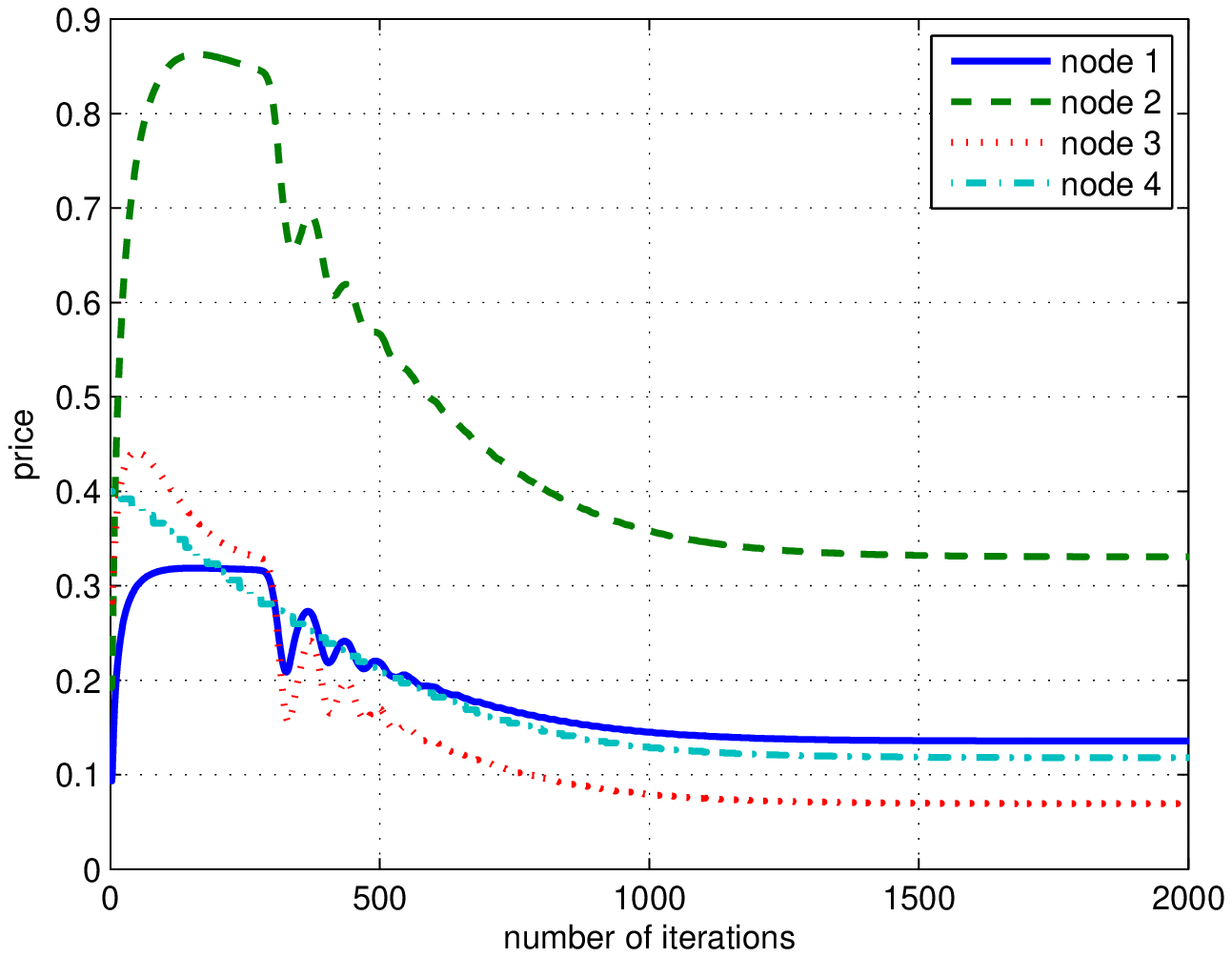}
\end{minipage}%
   \caption{Node 4's update frequencies are the same, $1/4$, and $1/20$ as others, respectively.}
   \label{fig:mal}
\end{figure*}

Finally, we consider the impacts of asynchronous updates among nodes in Fig.~\ref{fig:mal}, with the $10$ dB maximum power constraint and a constant step-size $\epsilon_i=10^{-3}$ for all nodes. Without loss of generality, it is assumed that node $4$ does not update frequently as others (e.g., malfunctional user). For a given channel realization, it needs about $200$ iterations for convergence if all nodes update their actions synchronously. When node $4$ updates slowly, the system convergence times becomes about $300$ and $1400$ iterations if node $4$'s update frequency is $1/4$ and $1/20$ of others, respectively. This shows that a slow updating node will slow down the overall convergence. Moreover, from the figure we observe that the proposed algorithm can converge to the global optimum, regardless of the slow updating node's update frequency (as long as it keeps updating). In other words, the asynchronous updates does not affect the network performance except the convergence time. The reason is that \textbf{P1} is essentially a convex optimization problem.

\subsection{Networks with More Users}

\begin{figure}[t]
\begin{centering}
\includegraphics[scale=.65]{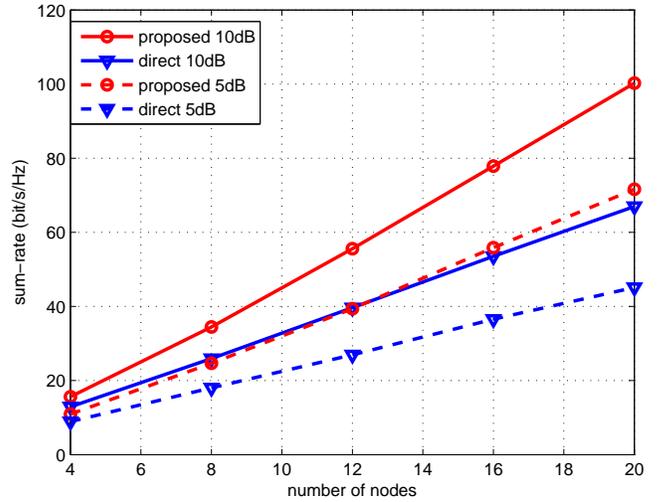}
\vspace{-0.1cm}
 \caption{Throughput comparison of two schemes.}\label{fig:throughput}
\end{centering}
\vspace{-0.3cm}
\end{figure}
\begin{figure}[t]
\begin{centering}
\includegraphics[scale=.55]{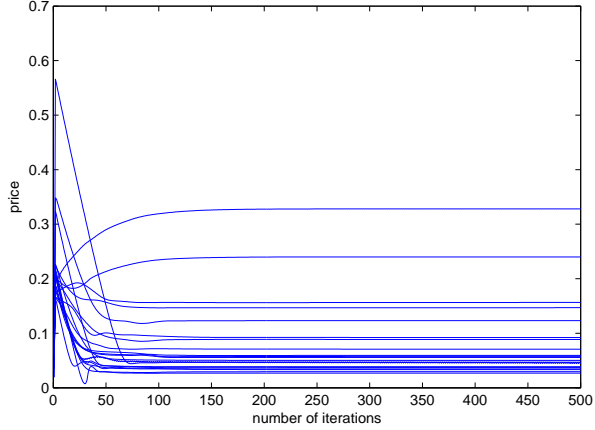}
\vspace{-0.1cm}
 \caption{Price convergence of $20$ nodes. Each line represents one node.}\label{fig:20user}
\end{centering}
\vspace{-0.3cm}
\end{figure}

In this subsection, we study the networks with more users. First, we compare the throughput performance of the proposed
power auction algorithm in comparison with the direct transmission
scheme in Fig.~\ref{fig:throughput}, with $\overline{p}_i=10$ dB and
$\overline{p}_i=5$ dB, $\forall i$, respectively. Here, a total of
$1000$ independent channel realizations are used. The locations of
nodes are random but uniformly distributed, varying with each
channel realization. The step size $\epsilon_i=10^{-3}$ is used,
$\forall i$. It is seen that the proposed algorithm outperforms the
direct transmission scheme by a significant margin, especially when
SNR is high or/and the number of nodes is large. This is consistent
with the previous study on cooperative communications from an
information theoretic perspective \cite{Laneman03,Laneman04}. Moreover, as
shown in Fig.~\ref{fig:20user}, we can observe that the complexity
of the proposed algorithm does not increase considerably even in the
case of large number of nodes.

\section{Conclusion and Future Work}

In this paper, we proposed a distributed framework for resource
allocation in cooperative networks.  We solved the problem by
mapping it into the multi-auctioneer multi-bidder power auction. By
following the proposed power auction, each user can capture the
optimal tradeoff of power allocation between the transmitter and
forwarder roles, and their behavior move towards the globally
optimal solution for weighted sum rates maximization. We further designed a
distributed realization mechanism to achieve the global optimum.
Our proposed framework can be generally applied to different classes
of networks, e.g., uplink cellular networks, ad hoc networks, and
peer-to-peer networks.

There are several directions to extend the results in this paper. This paper took the nonregenerative based AF relay strategy as an example when solving \textbf{P1}. We can also use other more advanced regenerative relay strategies (e.g., DF) or hybrid relay strategies for the similar problem, but the analysis will be more challenging. Moreover, this paper aimed to design distributed algorithms for solving the global optimization problem instead of game theoretical analysis. It will be interesting to further consider the incentive issues of the users (e.g., price-anticipating users) in a distributed network. Also, we assume that the message passing is timely and accurate. It will be very interesting to understand more about the performance of the algorithm under delayed and erroneous message passing (e.g.,\cite{Rad}) in our future work.

\appendices
\section{Proof of Proposition 1}
The proof is based on LaSalle's invariance principle \cite{Khalil}
and similar with that in \cite{Srikant}: Consider the differential
equation: $\dot{\boldsymbol{y}}(t)=f(\boldsymbol{y}(t))$. Let
$\boldsymbol{Y}: D\rightarrow\mathcal {R}$ be a radially
unbounded\footnote{$g(\boldsymbol{Z})$ is radially unbounded if
$\lim_{\|\boldsymbol{Z}\|\rightarrow\infty}g(\boldsymbol{Z})=\infty$.
}, continuously differentiable, positive
definite\footnote{$\boldsymbol{Y}$ is positive definite if
$\boldsymbol{Y}(\boldsymbol{Z}^*)=0$ for some $\boldsymbol{Z}^*$,
and $\boldsymbol{Y}(\boldsymbol{Z})>0$ for all $\boldsymbol{Z}\neq
\boldsymbol{Z}^*$.} function such that $\dot{\boldsymbol{Y}}\leq 0$
for all $\boldsymbol{Z}\in D$. Let $\mathcal {E}$ be the set of
points in $D$ where $\dot{\boldsymbol{Y}}=0$. Let $\mathcal {M}$ be
the largest invariant set\footnote{$\mathcal {M}$ is an invariant
set if $\boldsymbol{Z}(0)\in \mathcal {M}$ implies
$\boldsymbol{Z}(t)\in \mathcal {M}$ for all $t\geq 0.$} in $\mathcal
{E}$. Then every solution starting in $D$ approaches $\mathcal {M}$
as $t\rightarrow \infty$.

The Lyapunov function can be written as
\begin{equation}
V(\boldsymbol{\lambda}) = \frac{1}{2}\sum_{j\in \mathcal {K}}\left(
\lambda_j^{(t)} - \lambda_j^* \right)^2.
\end{equation}
It is obviously that this function is radially unbounded. We now
study time-derivative of this function. Differentiating this
function, we obtain:
\begin{eqnarray}
\dot{V}(\boldsymbol{\lambda}) &=& \sum_{j\in \mathcal {K}}\left(
\lambda_j^{(t)} - \lambda_j^* \right)\left(\sum_{i\in \mathcal
{K}}p_{j,i}^{(t)} - \overline{p}_j\right)_{\lambda_j}^+ \label{eqn:ieq1} \\
&\leq& \sum_{j\in \mathcal {K}}\left( \lambda_j^{(t)} - \lambda_j^*
\right)\left(\sum_{i\in \mathcal {K}}p_{j,i}^{(t)} -
\overline{p}_j\right) \label{eqn:ieq2}
\\ & = & \sum_{i\in \mathcal {K}}\sum_{j\in \mathcal {K}}\left( \lambda_j^{(t)} - \lambda_j^*
\right)\left(p_{j,i}^{(t)} - p_{j,i}^*\right) \nonumber \\
&& + \sum_{j\in \mathcal {K}}\left( \lambda_j^{(t)} - \lambda_j^*
\right)\left(\sum_{i\in \mathcal {K}}p_{j,i}^* -
\overline{p}_j\right) \label{eqn:ieq3} \\
&\leq& \sum_{i\in \mathcal {K}}\sum_{j\in \mathcal {K}}\left(
\lambda_j^{(t)} - \lambda_j^*
\right)\left(p_{j,i}^{(t)} - p_{j,i}^*\right) \label{eqn:ieq4}\\
& = & \sum_{i\in \mathcal {K}}\sum_{j\in \mathcal {K}}\left(
\frac{b_{j,i}^{(t)}}{p_{j,i}^{(t)}} - \frac{b_{j,i}^*}{p_{j,i}^*}
\right)\left(p_{j,i}^{(t)} - p_{j,i}^*\right)\label{eqn:ieq5}\\
&\leq& 0, \label{eqn:ieq6}
\end{eqnarray}
where (\ref{eqn:ieq2}) follows from (\ref{eqn:ieq1}) due to the
nature of  the projection $(a)_b^+$, i.e., if the projection is
active then (\ref{eqn:ieq1}) is zero while (\ref{eqn:ieq2}) is
positive, otherwise the equality holds. (\ref{eqn:ieq4}) follows due
to the fact that $\sum_{i\in \mathcal {K}}p_{j,i}^* =
\overline{p}_j$, or $\lambda_j^*=0$ if $\sum_{i\in \mathcal
{K}}p_{j,i}^* < \overline{p}_j$. Finally, if bids chose as $b_{j,i}
= p_{j,i} R'_i\left(p_{j,i}\right)$, (\ref{eqn:ieq6}) follows due to
the concavity and monotonicity of $R_i$.

Consequently, we have proved $\dot{V}(\boldsymbol{\lambda}) \leq 0$.
It also implies that
\begin{equation*}
\mathcal {E}:= \{\boldsymbol{\lambda}: \dot{V}(\boldsymbol{\lambda})
=0\}
\end{equation*}
is contained in the set
\begin{equation*}
\mathcal {S}:
\{\boldsymbol{\lambda}:(\ref{eqn:ieq1})=(\ref{eqn:ieq2})=(\ref{eqn:ieq4})=(\ref{eqn:ieq5})=0\}.
\end{equation*}

Let $\mathcal {M}$ be the largest invariant set of the primal-dual
algorithm contained in $\mathcal {E}$, then
$\boldsymbol{\lambda}^{(t)}$ converges to $\mathcal {M}$ as
$t\rightarrow \infty$. Since $\mathcal {M}\subset \mathcal
{E}\subset \mathcal {S}$, $\boldsymbol{\lambda}^{(t)}$ must satisfy
$\dot{V}(\boldsymbol{\lambda})=0$ as $t\rightarrow \infty$. Thus
$\lim_{t\rightarrow
\infty}\boldsymbol{\lambda}^{(t)}=\boldsymbol{\lambda}^*$.

Since $\boldsymbol{p}$ and $\boldsymbol{b}$ vary along with
$\boldsymbol {\lambda}$, according to LaSalle's invariance principle
\cite{Khalil}, it implies that $(\boldsymbol
{p}^{(t)},\boldsymbol{b}^{(t)}, \boldsymbol {\lambda}^{(t)})$
converges to the globally optimal point $(\boldsymbol
{p}^*,\boldsymbol {b}^*, \boldsymbol {\lambda}^*)$.

\bibliographystyle{IEEEtran}
\bibliography{IEEEabrv,Paper-TW-Dec-11-2284-auction}

\begin{IEEEbiography}
{Yuan Liu}(S'11)
received the B.S. degree from Hunan University of Science and Technology, Xiangtan, China, in 2006, and the M.S. degree from Guangdong University of Technology, Guangzhou, China, in 2009, both in Communications Engineering and with the highest honors. He is currently pursuing his Ph.D. degree at the Department of Electrical Engineering in Shanghai Jiao Tong University. His current research interests include cooperative communications, network coding, resource allocation, physical layer security, MIMO and OFDM techniques.

He is the recipient of the Guangdong Province Excellent Master Theses Award in 2010. He has been honored as an Exemplary Reviewer of the \textsc{IEEE Communications Letters}. He is also awarded the
IEEE Student Travel Grant for IEEE ICC 2012. He is a student member of the IEEE.
\end{IEEEbiography}

\begin{IEEEbiography}
{Meixia Tao}(S'00-M'04-SM'10)
received the B.S. degree in electronic engineering from Fudan University, Shanghai, China, in 1999, and the Ph.D. degree in electrical and electronic engineering from Hong Kong University of Science and Technology in 2003. She is currently an Associate Professor with the Department of Electronic Engineering, Shanghai Jiao Tong University, China. From August 2003 to August 2004, she was a Member of Professional Staff at Hong Kong Applied Science and Technology Research Institute Co. Ltd. From August 2004 to December 2007, she was with the Department of Electrical and Computer Engineering, National University of Singapore, as an Assistant Professor. Her current research interests include cooperative transmission, physical layer network coding, resource allocation of OFDM networks, and MIMO techniques.

Dr. Tao is an Editor for the \textsc{IEEE Transactions on Communications} and the \textsc{IEEE Wireless Communications Letters}. She was on the Editorial Board of the \textsc{IEEE Transactions on Wireless Communications} from 2007 to 2011 and the \textsc{IEEE Communications Letters} from 2009 to 2012. She also served as Guest Editor for \textsc{IEEE Communications Magazine} with feature topic on LTE-Advanced and 4G Wireless Communications in 2012, and Guest Editor for \textsc{EURISAP J WCN} with special issue on Physical Layer Network Coding for Wireless Cooperative Networks in 2010. She was in the Technical Program Committee for various conferences, including IEEE INFOCOM, IEEE GLOBECOM, IEEE ICC, IEEE WCNC, and IEEE VTC.

Dr. Tao is the recipient of the IEEE ComSoC Asia-Pacific Outstanding Young Researcher Award in 2009.
\end{IEEEbiography}

\begin{IEEEbiography}
{Jianwei Huang}(S'01-M'06-SM'11)
is an Assistant Professor in the Department of Information Engineering at the Chinese University of Hong Kong. He received B.E. in Radio Engineering from Southeast University (Nanjing, Jiangsu, China) in 2000, M.S. and Ph.D. in Electrical and Computer Engineering from Northwestern University (Evanston, IL, USA) in 2003 and 2005, respectively. He worked as a Postdoc Research Associate in the Department of Electrical Engineering at Princeton University during 2005-2007. He was a visiting scholar at Ecole Polytechnique Federale De Lausanne (EPFL) and University of California-Berkeley.

Dr. Huang currently leads the Network Communications and Economics Lab (ncel.ie.cuhk.edu.hk), with the main research focus on nonlinear optimization and game theoretical analysis of networks, especially on network economics, cognitive radio networks, and smart grid. He has more than 100 publications in leading international journals, conferences, and books, which have been cited around 2000 times according to Google Scholar. He is the recipient of the IEEE Marconi Prize Paper Award in Wireless Communications in 2011, the International Conference on Wireless Internet Best Paper Award 2011, the IEEE GLOBECOM Best Paper Award in 2010, Asia-Pacific Conference on Communications Best Paper Award in 2009, the IEEE ComSoc Asia-Pacific Outstanding Young Researcher Award in 2009, and Walter P. Murphy Fellowship at Northwestern University in 2001.

Dr. Huang has served as one of the founding Editors of \textsc{IEEE Journal on Selected Areas in Communications} - Cognitive Radio Series, Editor of \textsc{IEEE Transactions on Wireless Communications} in the area of Wireless Networks and Systems, Guest Editor of IEEE \textsc{Journal on Selected Areas in Communications} special issue on "Economics of Communication Networks and Systems", Lead Guest Editor of \textsc{IEEE Journal of Selected Areas in Communications} special issue on "Game Theory in Communication Systems", Lead Guest Editor of \textsc{IEEE Communications Magazine} Feature Topic on "Communications Network Economics", and Guest Editor of several other journals including (Wiley) \textsc{Wireless Communications and Mobile Computing}, \textsc{Journal of Advances in Multimedia}, and \textsc{Journal of Communications}.

Dr. Huang has served as Chair of IEEE ComSocMMTC (Multimedia Communications Technical Committee) during 2012-2014, Steering Committee Member of \textsc{IEEE Transactions on Multimedia} and IEEE ICME during 2012 - 2014, Vice-Chair of IEEE MMTC during 2010 - 2012, Director of IEEE MMTC E-letter in 2010, and Chair of Meetings and Conference Committee of IEEE ComSoc Asia Pacific Board during 2012-2013. He also serves as the TPC Co-Chair of IEEE GLOBECOM Selected Aresa of Communications Symposium (Game Theory for Communications Track) 2013, the TPC Co-Chair of IEEE WiOpt (International Symposium on Modeling and Optimization in Mobile, Ad Hoc, and Wireless Networks) 2012, the Publicity Co-Chair of IEEE Communications Theory Workshop 2012, the TPC Co-Chair of IEEE ICCC Communication Theory and Security Symposium 2012, the Student Activities Co-Chair of IEEE WiOpt 2011, the TPC Co-Chair of IEEE GlOBECOM Wireless Communications Symposium 2010, the TPC Co-Chair of IWCMC (the International Wireless Communications and Mobile Computing) Mobile Computing Symposium 2010, and the TPC Co-Chair of GameNets (the International Conference on Game Theory for Networks) 2009. He is a TPC member of leading conferences such as INFOCOM (2009 - 2013), MobiHoc (2009, 2012), ICC, GLBOECOM, DySPAN, WiOpt, NetEcon, and WCNC. He is a senior member of the IEEE.
\end{IEEEbiography}

\end{document}